\setlist[itemize]{parsep=2pt, topsep=2pt}
\newcommand\FF{\mathbb{F}}
\newcommand\PP{\mathbb{P}}
\newcommand\KK{\mathbb{K}}
\newcommand\calA{\mathcal{A}}
\newcommand\calC{\mathcal{C}}
\newcommand\calD{\mathcal{D}}
\newcommand\calM{\mathcal{M}}
\newcommand\calO{\mathcal{O}}
\newcommand\calP{\mathcal{P}}
\newcommand\calV{\mathcal{V}}
\newcommand\bfa{{\bm{a}}}
\newcommand\bfc{{\bm{c}}}
\newcommand\bfe{{\bm{e}}}
\newcommand\bfg{{\bm{g}}}
\newcommand\bfh{{\bm{h}}}
\newcommand\bfk{{\bm{k}}}
\newcommand\bfp{{\bm{p}}}
\newcommand\bfr{{\bm{r}}}
\newcommand\bfu{{\bm{u}}}
\newcommand\bfx{{\bm{x}}}
\newcommand\bfy{{\bm{y}}}
\newcommand\bfz{{\bm{z}}}
\newcommand\bfzero{{\bm{0}}}
\newcommand\bfbeta{{\bm{\beta}}}
\newcommand\bfA{{\bm{A}}}
\newcommand\bfB{{\bm{B}}}
\newcommand\bfC{{\bm{C}}}
\newcommand\bfG{{\bm{G}}}
\newcommand\bfH{{\bm{H}}}
\newcommand\bfM{{\bm{M}}}
\newcommand\bfP{{\bm{P}}}
\newcommand\bfV{{\bm{V}}}
\newcommand\bfS{{\bm{S}}}
\newcommand\bfT{{\bm{T}}}
\newcommand\bfX{{\bm{X}}}
\newcommand\bfY{{\bm{Y}}}
\newcommand\ext{\operatorname{Ext}}
\newcommand\Gab{\operatorname{Gab}}
\newcommand\rowspan{\operatorname{RowSp}}
\newcommand\colspan{\operatorname{ColSp}}
\newcommand\Moore{\operatorname{Moore}}
\newcommand\Gr{\operatorname{Gr}}
\newcommand\KeyGen{\mathsf{KeyGen}}
\newcommand\Encrypt{\mathsf{Encrypt}}
\newcommand\Decrypt{\mathsf{Decrypt}}
\newcommand\kpriv{{\bfk_{\rm priv}}}
\newcommand\kpub{{\bfk_{\rm pub}}}
\newcommand\minrank{\textsc{MinRank}}
\newcommand\gabsd{\textsc{Gab-SD}}
\newcommand\corgab{\textsc{CorGab}}
\DeclareMathOperator{\rk}{rk}
\renewcommand\epsilon{\varepsilon}
\newcommand\mydef{\coloneqq}
\newcommand\qbin[3]{
  \left[\begin{smallmatrix} #1 \\[0.3em] #2 \end{smallmatrix}\right]_{#3}
}
\title{RAMESSES,  a Rank Metric Encryption Scheme with Short Keys}
\author{Julien Lavauzelle\inst{1} \and  Pierre Loidreau\inst{2} \and  Ba-Duc Pham\inst{1}}
\institute{Univ Rennes, CNRS, IRMAR - UMR 6625, F-35000 Rennes, France \\\email{\{julien.lavauzelle, ba-duc.pham\}@univ-rennes1.fr}\and
Univ Rennes, DGA MI, CNRS, IRMAR - UMR 6625, F-35000 Rennes, France \\\email{pierre.loidreau@univ-rennes1.fr } }
\date{\today}
\begin{document}

\maketitle

\begin{abstract}
We present a rank metric code-based encryption scheme with key and ciphertext sizes comparable to that of isogeny-based cryptography for an equivalent security level. The system also benefits from efficient encryption and decryption algorithms, which rely on linear algebra operations over finite fields of moderate sizes. The security only relies on rank metric decoding problems, and does not require to hide the structure of a code. Based on the current knowledge, those problems cannot be efficiently solved by a quantum computer. Finally, the proposed scheme admits a failure probability that can be precisely controlled and made as low as possible.
\end{abstract}

\keywords{Post-quantum cryptography, encryption scheme, Gabidulin codes, rank-metric decoding problems}

\section{Introduction}

With the growing probability of the existence of a near-future quantum computer, it has become important to propose alternatives to existing public-key encryption schemes and key exchange protocols based on number theory. The recent NIST Post-Quantum Cryptography Standardization process motivates proposals in this sense. Along with lattice-based cryptography, code-based cryptography is the most represented among proposals for encryption schemes or key-encapsulation mechanisms (KEMs).
Code-based submissions generically rely on the hardness of decoding problems, either in the Hamming metric or in the rank metric. Hamming metric decoding problems enjoy a long-standing study and few practical improvements for more than fifty years, which ascertain their security.
On the opposite, rank metric decoding problems have been studied for less than twenty years~\cite{CS96}, and their solving complexity is not yet fully stabilized (see the recent results of \cite{BardetBBGNRT09}).
Nevertheless, they benefit from much shorter keys and seem very attractive for practical implementation, culminating in submissions for the NIST standardization process~\cite{ROLLO17, RQC17}. So as to further reduce the key sizes, designers often use specific structures as quasi-cyclicity (equivalent of Module-LWE for lattices) which could be suspected to introduce additional weaknesses~\cite{Loi14}.

In this paper we aim at designing a new one-way encryption scheme featuring very compact keys, based on rank metric decoding problems. The long-standing idea finds origins in~\cite{faure2006new} which was an extended idea of a proposal in Hamming metric~\cite{AugotFiniasz-PKC-PolyReconstruction_2003}. The original rank metric encryption scheme was broken in \cite{GaboritOT18}, and a recent repair was proposed in~\cite{wachter2018repairing}. However it implies to choose a specific code and a syndrome coming from a structured vector of moderate rank, which we want to avoid here.

Inspired from~\cite{faure2006new}, we design a simple one-way encryption scheme with the following strengths.
\begin{itemize}
  \item The security of the scheme only relies on decoding problems in rank metric (such as {\minrank} and {\gabsd}) and does not require to hide the structure of a code. These decoding problems have been --- and are still being --- scrutinized in active research fields.
  \item Especially as a KEM, our proposal enables very small parameters for a given security target. Key sizes are competitive with isogeny-based proposals such as SIKE~\cite{SIKE17}. 
  \item Even if the decryption algorithm is probabilistic, it is easy to control the failure probability and to make it as small as possible without increasing to much the parameters.
\end{itemize}

A remaining weakness would be that underlying problems have been less investigated than others. However, our goal here is also to emulate research in this field to be able to ascertain the security of the scheme.

In a first section we introduce necessary notation and definitions. Then we describe the encryption scheme and we propose sets of parameters for security levels 1, 3, 5 of the NIST competition. Keys and ciphertext sizes are not larger than few hundreds of bytes. In the next section, we prove the consistency of the encryption scheme and we analyze its security by showing to which problems the security can be reduced, and by giving the complexity of algorithms solving these problems.

\section{Preliminaries}

\subsection{Notation and definitions}
\label{subsec:notation}

Throughout the paper, we set $q = 2^n$ for some integer $n \ge 1$, and we let $\FF_q$ denote the finite field with $q$ elements. The field $\FF_q$ can also be viewed as a vector space of dimension $n$ over $\FF_2$. The map $\theta: \FF_q \to \FF_q$, $x \mapsto x^2$, is $\FF_2$-linear and is called the Frobenius automorphism. Its inverse is the $(n-1)$-fold composition $\theta^{n-1} = \theta \circ \dots \circ \theta$. For convenience, we sometimes write $x^{[i]} \mydef \theta^i(x)$, for $i \in [0,n-1] \mydef \{0, \dots, n-1\}$.

Let $\bfbeta = (\beta_1, \dots, \beta_n) \in \FF_q^n$ be a basis of $\FF_q$ over $\FF_2$. We define the extension map
\[
\begin{array}{rclc}
    \ext_\bfbeta : & \FF_q^n & \to & \FF_2^{n \times n} \\
    ~ & \bfa = (a_1, \dots, a_n) & \mapsto & \bfA = ( \bm{\alpha}_1^\top, \dots, \bm{\alpha}_n^\top )
\end{array}
\]
where, for all $1 \le j \le n$, the vector $\bm{\alpha}_j \in \FF_2^n$ consists of coordinates of $a_j \in \FF_q$ in the basis $\bfbeta$, \emph{i.e.\ } $a_j = \sum_{i=1}^n \beta_iA_{i,j}$. In particular, for every $\bfA \in \FF_2^{n \times n}$, we have $\ext_\bfbeta(\bfbeta \bfA) = \bfA$.

The \emph{rank} of $\bfa \in \FF_q^n$, denoted $\rk(\bfa)$, is defined as the rank over $\FF_2$ of its extension matrix $\bfA = \ext_\bfbeta(\bfa)$. Notice that $\rk(\bfa)$ does not depend on the choice of the basis $\bfbeta$. We also define the row space of $\bfa \in \FF_q^n$ with respect to $\bfbeta$ as
\[
\rowspan_\bfbeta(\bfa) \mydef \{ \bfx \ext_\bfbeta(\bfa), \bfx \in \FF_2^n\} \subseteq \FF_2^n\,.
\]
Similarly, the column space of $\bfa \in \FF_q^n$ is $\colspan_\bfbeta(\bfa) \mydef \{ sum_{i=1}^n x_i a_i \mid \bfx \in \FF_2^n\} \subseteq \FF_q$.

We let $\Gr(t, \FF_2^n)$ denote the set of subspaces of $\FF_2^n$ of dimension $t$, which contains $\qbin{n}{t}{2} \mydef \frac{(2^n-1)(2^{n-1}-1) \cdots (2^{n-t+1}-1)}{(2^t-1)(2^{t-1}-1) \cdots (2^1-1)}$ elements. Each subspace $\calV \in \Gr(t, \FF_2^n)$ can be represented by the unique reduced row echelon form (RREF) of any matrix $\bfV \in \FF_2^{n \times n}$ whose row space generates $\calV$. We know from~\cite{SilbersteinE11, Medvedeva12} that this representation can be computed efficiently (in time $\tilde{O}(nt(n-t))$).
Recall that a matrix is in reduced row echelon form if the following holds:
\begin{itemize}[label=--]
\item the index of the pivot (\emph{i.e.}\ the first non-zero coefficient) of row $i$ is strictly larger than the index of the pivot of row $i-1$;
\item all pivots are ones;
\item each pivot is the only non-zero entry in its column.
\end{itemize}
We finally define $\calP_{t,n} \mydef \{ \bfP \in \FF_2^{n \times n} \mid \rk(\bfP) = t, \bfP \text{ is in RREF} \}$.



\subsection{Rank metric codes}

In this paper, we embed $\FF_q^n$ with the \emph{rank metric}: for $\bfa \in \FF_q^n$, the weight of $\bfa$ is defined as $\| \bfa \| \mydef \rk(\bfa)$. We consider $\FF_q$-linear codes, \emph{i.e.}\ $\FF_q$-linear subspaces $\calC \subseteq \FF_q^n$. Notice that the field extension degree $n$ is also the length of the code $\calC$. The dimension of a code $\calC$ is $k = \dim_{\FF_q}(\calC)$, and its minimum (rank) distance is $d = \min_{\bfc \in \calC \setminus \{ \bfzero \}} \| \bfc \|$. A generator matrix (resp. a parity-check matrix) for $\calC$ is a matrix $\bfG \in \FF_q^{k \times n}$ (resp. $\bfH \in \FF_q^{(n-k) \times n}$) such that $\calC = \{ \bfa \bfG, \bfa \in \FF_q^k \}$ (resp. $\bfH \bfc^\top = \bfzero$ for every $\bfc \in \calC$).

Let us define $\bfbeta^{[i]} \mydef (\beta_1^{[i]}, \dots, \beta_n^{[i]}) \in \FF_q^n$. Its Moore matrix is defined as
\[
\Moore_n(\bfbeta) \mydef
\begin{pmatrix}
  ~ & \bfbeta^{[0]} & ~ \\ 
  ~ & \vdots      & ~ \\ 
  ~ & \bfbeta^{[n-1]} & ~
\end{pmatrix} \in \FF_q^{n \times n}
\]
and it is invertible over $\FF_q$. Hence $(\bfbeta^{[0]}, \dots, \bfbeta^{[n-1]})$ is a basis of $\FF_q^n$.

\begin{definition}[Gabidulin code~\cite{Delsarte78, Gabidulin85}]
  Let $\bfg =(g_1, \dots, g_n) \in \FF_q^n$ be an ordered basis of $\FF_q/\FF_2$. The \emph{Gabidulin code} of dimension $k$ with evaluation vector $\bfg$ is the subspace $\Gab_k(\bfg) \subseteq \FF_q^n$ generated by the $k$ first rows of $\Moore_n(\bfg)$.
\end{definition}

Gabidulin codes are optimal codes with respect to the rank metric~\cite{Delsarte78} and they can be efficiently decoded~\cite{Gabidulin85} up to $\lfloor \frac{n-k}{2} \rfloor$ errors. By definition, the submatrix consisting in the $k$ first rows of $\Moore_n(\bfg)$ is a generator matrix for $\Gab_k(\bfg)$. It is also clear that $\Gab_k(\bfg) \subseteq \Gab_{k+1}(\bfg)$ for every $1 \le k \le n$, and we have $\Gab_n(\bfg) = \FF_q^n$. Hence, one can propose the following definition.

\begin{definition}[$\bfg$-degree]
  Let $\bfx \in \FF_q^n$ and $\bfX = \ext_\bfg(\bfx)$. The \emph{$\bfg$-degree of $\bfx$}, denoted $\deg_\bfg(\bfx)$, is the unique integer $\ell \in [0,n-1]$ such that $\bfx \in \Gab_{\ell+1}(\bfg) \setminus \Gab_\ell(\bfg)$. Similarly, one defines the $\bfg$-degree of $\bfX$ as $\deg_\bfg(\bfX) = \deg_\bfg(\bfx)$. 
\end{definition}

In other words, a vector $\bfx \in \FF_q^n$ of $\bfg$-degree $\ell$ can be written
\[
\bfx = \lambda_\ell \bfg^{[\ell]} + \sum_{j=0}^{\ell-1} \lambda_j \bfg^{[j]}
\]
for some non-zero $\lambda_\ell \in \FF_q \setminus \{ 0 \}$ and some $\ell$-tuple $(\lambda_{\ell-1}, \dots, \lambda_0) \in \FF_q^\ell$.

Finally, the dual code $\Gab_k(\bfg)^\perp = \{ \bfa \in \FF_q^n \mid \forall \bfc \in \Gab_k(\bfg),  \sum_{i=1}^n a_i c_i = 0 \}$ is also a Gabidulin code $\Gab_{n-k}(\bfh)$ for some basis $\bfh \in \FF_q^n$ that can be efficiently computed from $\bfg$. In other words, there exists a parity-check matrix for $\Gab_k(\bfg)$ consisting in the $(n-k)$ first rows of a Moore matrix associated to some $\bfh \in \FF_q^n$, see \emph{e.g.}~\cite{Gabidulin85}.

\section{The encryption scheme}
\label{sec:scheme}

\paragraph{System parameters.} Integers $1 \le w, k, \ell, t \le n$ are public parameters and specified according to the desired security level (see Section~\ref{sec:parameters}). We set $q = 2^n$, and we also make public a basis $\bfg$ of $\FF_q/\FF_2$. We let $\bfH$ denote a \emph{fixed} parity-check matrix of $\Gab_k(\bfg)$.

\paragraph{Key generation.} Alice picks uniformly at random a vector $\kpriv \in \FF_q^n$ of rank $w$. As explained in Algorithm~\ref{algo:keygen}, the public key is the syndrome of $\kpriv$ with respect to the parity-check matrix $\bfH$ of $\Gab_k(\bfg)$, and the private key is $\kpriv$.

\begin{algorithm}[t!]
  \KwData{}
  \KwResult{a pair of public/private keys $(\kpub, \kpriv)$}
  Pick $\kpriv \leftarrow_{\$} \{ \bfx \in \FF_q^n, \|\bfx\| = w \}$\\
  Compute $\kpub \in \FF_q^{n-k}$ such that $\kpub^\top = \bfH \kpriv^\top$\\
  Output $(\kpub, \kpriv) \in \FF_q^{n-k} \times \FF_q^n$
  \caption{\label{algo:keygen}$\KeyGen(1^\lambda)$}
\end{algorithm}

\paragraph{Encryption.} The set of plaintexts is $\calP_{t,n}$, as defined in Section~\ref{subsec:notation}. Encryption is presented in Algorithm~\ref{algo:encrypt}. Notice that in steps 3-4, the computation of $\bfp'$ should be understood as a the generation of a uniform random vector such that $\rowspan_\bfg(\bfp')$ is the rowspan of $\bfP$.

\begin{algorithm}[t!]
  \KwData{public key $\kpub \in \FF_q^{n-k}$, plaintext $\bfP \in \calP_{t,n}$}
  \KwResult{ciphertext $\bfu \in \FF_q^{n-k}$}
  Compute any $\bfy \in \FF_q^n$ such that $\bfH \bfy^\top = \kpub^\top$\\
  Pick $\bfT \leftarrow_{\$} \{ \bfM \in \FF_2^{n \times n}, \deg_\bfg(\bfM) = \ell \}$\\
  Pick $\bfS \leftarrow_{\$} \{ \bfM \in \FF_2^{n \times n}, \rk(\bfM) = n \}$\\
  Compute $\bfp' = \bfg \bfS \bfP \in \FF_q^n$\\
  Output $\bfu \in \FF_q^{n-k}$ such that $\bfu^\top = \bfH (\bfy \bfT + \bfp')^\top$
  \caption{\label{algo:encrypt}$\Encrypt(\kpub, \bfP)$}
\end{algorithm}

\paragraph{Decryption.} We present in Algorithm~\ref{algo:decrypt} a decryption algorithm which may fail with negligible probability. The failure rate is devoted to be cryptographically small, and is bounded in Section~\ref{subsec:consistency}. We also make use of an $\FF_2$-linear map $V_\kpriv : \FF_q^n \to \FF_q^n$ such that $V_\kpriv(\kpriv) = \bfzero$. This map can be efficiently computed from the knowledge of the private key $\kpriv$. Mathematical properties of this map are given in Section~\ref{subsec:maths}

\begin{algorithm}[t!]
  \KwData{private key $\kpriv \in \FF_q^n$, ciphertext $\bfu \in \FF_q^{n-k}$}
  \KwResult{plaintext $\hat{\bfP} \in \calP_{t,n}$, or failure}
  Compute a solution $\bfx \in \FF_q^n$ to the linear system $\bfH \bfx^\top = \bfu^\top$.\\
  Compute $\bfz = V_\kpriv(\bfx) \in \FF_q^n$.\\
  Decode $\bfz$ as a corrupted $\Gab_{k+\ell+w}(\bfg)$-codeword. If success, one gets an error vector $\bfa \in \FF_q^n$ of rank $\le t$.\\
  {\bf If} $\rk(\bfa) < t$, output failure.\\
  {\bf Otherwise}, output  $\hat{\bfP} = {\tt RREF}(\ext_\bfg(\bfa))$.
  \caption{\label{algo:decrypt}$\Decrypt(\kpriv, \bfu)$}
\end{algorithm}

In Algorithm~\ref{algo:decrypt}, one needs to decode Gabidulin codes up to half their minimum distance, \emph{i.e.}\ to decode errors of rank less than $\lfloor \frac{n - \dim \Gab}{2} \rfloor$. Many such algorithms can be found in the literature since the seminal work of Gabidulin~\cite{Gabidulin85}. Some of them are based on solving a so-called \emph{key equation}, such as~\cite{Roth91, ParamonovT91, Gabidulin91, RichterP04, Wachter-ZehAS13} and others use interpolation, for instance~\cite{Loidreau05}. Fastest ones run in $\calO(n^2)$ operations over $\FF_q$.

  
\section{Parameters}
\label{sec:parameters}

In Table~\ref{tab:parameters}, we propose three sets of parameters for RAMESSES as a KEM, according to the desired level of security. Table~\ref{tab:parameters-safe} proposes a set of parameters for RAMESSES as a PKE. There are generic transformations from PKEs to KEMs, widely used in the NIST competition. Note that the decryption failure can be finely tuned as is explained in Section~\ref{subsec:consistency}. One can notice that the post-quantum security is much larger than half the classical one, which is unusual in code-based systems. Indeed the best current attacks against RAMESSES do not use enumeration techniques, which would benefit from the use of Grover algorithm, but Groebner bases algebraic techniques for which there is no known efficient quantum algorithmic speedup, as explained in Section~\ref{subsec:security}. 

\begin{table}[t!]
  \footnotesize
  \centering
  \begin{tabular}{ccccccccc}
    \hline
    \multirow{2}{*}{$n$} & \multirow{2}{*}{$k$} & \multirow{2}{*}{$w$} & \multirow{2}{*}{$\ell$} & \multirow{2}{*}{$t$} & classical  & post-quantum  & public key/ciphertext & private key  \\
    & & & & & security (bits) & security (bits) & size (bytes) & size (bytes)  \\
    \hline
    \hline
    $64$ & $32$ & $19$ & $3$ & $5$ & $141 \; (\ge 128)$ & $126$ & $256$ & $152$  \\
    $80$ & $40$ & $23$ & $3$ & $7$ & $202 \; (\ge 192)$ & $158$ & $400$ & $230$  \\
    $96$ & $48$ & $27$ & $3$ & $9$ & $265 \; (\ge 256)$ & $190$ & $576$ & $324$  \\
    \hline
  \end{tabular}
  \medskip
  
  \caption{\label{tab:parameters}Sets of parameters for RAMESSES as a KEM, with different levels of security. The security is estimated according to the current state of the art of algebraic attacks; the linear algebra constant is set to $\omega = \log_2(7) \simeq 2.807$. Decryption failure rates are respectively bounded by $2^{-40}$, $2^{-50}$ and $2^{-60}$. }
\end{table}

\begin{table}[t!]
  \footnotesize
  \centering
  \begin{tabular}{ccccccccc}
    \hline
    \multirow{2}{*}{$n$} & \multirow{2}{*}{$k$} & \multirow{2}{*}{$w$} & \multirow{2}{*}{$\ell$} & \multirow{2}{*}{$t$} & classical  & post-quantum  & public key/ciphertext & private key  \\
    & & & & & security (bits) & security (bits) & size (bytes) & size (bytes)  \\
    \hline
    \hline
    164 & 116 & 27 & 3 & 9 & $\ge$ 256 & $\ge$ 256 & 984 & 554 \\
    \hline
  \end{tabular}
  \medskip  
  \caption{\label{tab:parameters-safe} A set of parameters for RAMESSES as a PKE, with decryption failure rate $\le 2^{-128}$. The security is estimated according to the current state of the art of algebraic attacks; the linear algebra constant is set to $\omega = \log_2(7) \simeq 2.807$.}
\end{table}

\paragraph{Claimed security.} The claimed security is computed according to known attacks reported in Section~\ref{subsec:security}.

\paragraph{Public key size.} The public key consists in a vector $\kpub \in \FF_q^{n-k}$. Thus, its size is $(n-k)n$ bits, or $\frac{(n-k)n}{8}$ bytes.

\paragraph{Private key size.} For the private key $\kpriv \in \FF_q^n$, Alice actually needs to store only the map $V_\kpriv$. From Section~\ref{subsec:maths}, this map is a monic polynomial over $\FF_q$ of degree $w$. Hence only $w$ coefficients over $\FF_q$ actually need to be stored, the size of the private key is thus $wn$ bits, or $\frac{wn}{8}$ bytes.

\paragraph{Ciphertext size.} The ciphertext is a vector $\bfu \in \FF_q^{n-k}$, hence its size is $(n-k)n$ bits, \emph{i.e.}\ $\frac{(n-k)n}{8}$ bytes.

\section{Analysis}

\subsection{Mathematical background}
\label{subsec:maths}

Gabidulin codes can be interpreted in the context of skew polynomial rings. Recall that $\theta$ represents the Frobenius automorphism $x \mapsto x^2$. The \emph{skew polynomial ring} $\FF_q[X; \theta]$, originally studied by \O re~\cite{Ore33a,Ore33b}, is the ring of univariate polynomials defined by the non-commutative multiplicative rule
\[
X \cdot a = \theta(a) \cdot X, \quad\quad a \in \FF_q\,.
\]
In our context, skew polynomials are also called linearized polynomials. One can define the evaluation of a skew polynomial $P = \sum_{i=0}^d a_i X^i \in \FF_q[X; \theta]$ at $x \in \FF_q$ as follows:
\[
P(x) \mydef \sum_{i=0}^d a_i \theta^i(x) = \sum_{i=0}^d a_i x^{2^i}\,.
\]

The evaluation vector of $P$ at $\bfx \in \FF_q^n$ is defined as
\[
P(\bfx) \mydef (P(x_1), \dots, P(x_n)) \in \FF_q^n\,.
\]
Thus, the rows of ${\rm Moore}_n(\bfg)$ can be seen as the evaluation vectors over $\bfg$, of the sequence of degree-ordered skew monomials $1, X, \dots, X^{n-1}$. As a consequence, one can view Gabidulin codes as analogues of Reed-Solomon codes for skew polynomial rings:
\[
\Gab_k(\bfg) = \{ P(\bfg) \mid P \in \FF_q[X; \theta], \deg P < k \} \,.
\]

For $\bfx \in \FF_q^n$, the polynomial $P(X) \in \FF_q[X;\theta]$ of minimum degree such that $P(\bfg) = \bfx$ is the \emph{$\bfg$-interpolating polynomial of $\bfx$} and is denoted $L_\bfx(X)$. By definition $\deg(L_\bfx) = \deg_\bfg(\bfx)$.

Finally,  given $\bfe \in \FF_q^n$, the set of polynomials $P \in \FF_q[X; \theta]$ satisfying $P(\bfe) = \bfzero$ is a left-ideal $I_\bfe$ of $\FF_q[X; \theta]$. Since skew polynomial rings are principal ideal domains, we can define the \emph{minimum vanishing polynomial} $V_\bfe(X) \in \FF_q[X; \theta]$ of $\bfe$ as the unique monic skew polynomial which generates $I_\bfe$. Notice that $\deg(V_\bfe) = \rk(\bfe) \ge n - \deg_\bfg(\bfe)$.

The following lemma will be helpful for the analysis of the scheme consistency.

\begin{lemma}
  \label{lem:aux-consistency}
  Let $P(X) \in \FF_q[X; \theta]$ and $\bfa \in \FF_q^n$. Then we have  $\rowspan_\bfg( P(\bfa) ) \subseteq \rowspan_\bfg( \bfa )$. Moreover,  if $\rowspan_\bfg( P(\bfa) ) \ne \rowspan_\bfg( \bfa )$, then there exists a non-zero $x = \sum_{i=1}^n \lambda_i a_i \in \colspan( \bfa )$ such that $P(x) = 0$.
\end{lemma}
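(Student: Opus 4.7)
The plan is to translate the statement into matrix language over $\FF_2$. Set $\bfA \mydef \ext_\bfg(\bfa) \in \FF_2^{n \times n}$ and view the linearized polynomial $P$ as an $\FF_2$-linear endomorphism of $\FF_q$ (which it is, since $\theta$ and multiplication by any element of $\FF_q$ are $\FF_2$-linear); let $\bfM \in \FF_2^{n \times n}$ denote its matrix in the basis $\bfg$. The central identity to establish first is
\[
\ext_\bfg(P(\bfa)) \;=\; \bfM \bfA,
\]
which follows because the $j$-th column of $\ext_\bfg(\bfa)$ is the $\bfg$-coordinate column of $a_j$, so the $j$-th column of $\bfM \bfA$ is the $\bfg$-coordinate column of $P(a_j)$.

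Once this identity is in hand, the first claim is immediate: every row of $\bfM \bfA$ is an $\FF_2$-linear combination of rows of $\bfA$, hence $\rowspan_\bfg(P(\bfa)) \subseteq \rowspan_\bfg(\bfa)$.

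For the second claim, I would assume the inclusion is strict, so that $\rk(\bfM \bfA) < \rk(\bfA)$. Since $\colspan(\bfA) \subseteq \FF_2^n$ has dimension $\rk(\bfA)$ while $\bfM$ sends it to a subspace of strictly smaller dimension, $\bfM$ must vanish on some nonzero vector $\bfc \in \colspan(\bfA)$. Such a $\bfc$ can be written as $\bfc = \bfA \bm{\lambda}$ for some $\bm{\lambda} = (\lambda_1, \dots, \lambda_n) \in \FF_2^n$, and setting $x \mydef \sum_{i=1}^n \lambda_i a_i$ one verifies that $\bfc$ is exactly the $\bfg$-coordinate column of $x$. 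Hence $\bfM \bfc = \bfzero$ translates to $P(x) = 0$, and $x \ne 0$ because $\bfc \ne \bfzero$.

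The only delicate point is the row/column bookkeeping underlying the identity $\ext_\bfg(P(\bfa)) = \bfM \bfA$ and the identification of $\colspan(\bfA) \subseteq \FF_2^n$ with $\colspan_\bfg(\bfa) \subseteq \FF_q$. Once those conventions are pinned down, the argument reduces to the standard fact that a linear map cannot decrease the dimension of a subspace without acquiring a kernel on it.
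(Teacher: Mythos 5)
Your proof is correct and follows essentially the same route as the paper's: both arguments rest on the $\FF_2$-linearity of $P$ for the row-space inclusion and on a rank-drop/kernel argument on $\colspan(\bfa)$ for the second claim. The only difference is packaging --- you encode everything in the single matrix identity $\ext_\bfg(P(\bfa)) = \bfM\,\ext_\bfg(\bfa)$, while the paper proves the inclusion via an annihilating matrix $\bfB$ with $\bfa\bfB = \bfzero$ and exhibits the kernel element by choosing a basis of $\colspan(\bfa)$; the substance is identical.
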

\begin{proof}
  Let $\bfB \in \FF_2^{n \times n}$ satisfy $\rowspan_\bfg(\bfa) = \{ \bfx \in \FF_2^n, \bfx \bfB = \bfzero \}$. In particular, one can see that $\bfa \bfB = \bfzero$. Hence, by $\FF_2$-linearity $P(\bfa) \bfB = P(\bfa \bfB) = \bfzero$. Thus, every $\bfy \in \rowspan_\bfg(P(\bfa))$ satisfies $\bfy \bfB = \bfzero$, leading to $\rowspan_\bfg(P(\bfa)) \subseteq \rowspan_\bfg(\bfa)$.

Assume now that $\rowspan_\bfg( P(\bfa) ) \ne \rowspan_\bfg( \bfa )$. It implies that $\dim \colspan(P(\bfa)) < \dim \colspan( \bfa )$. Let $(a_{i_j})_{1 \le j \le k} \subset \FF_q$ be an ordered basis of $\colspan( \bfa ) \subseteq \FF_q$ over $\FF_2$. Then there must exists a non-zero $(\lambda_j) \in \FF_2^k$ such that $\sum_{j=1}^k \lambda_j P(a_{i_j}) = 0$, otherwise we would have $\dim \colspan( P(\bfa) ) = k$. If we set $x = \sum_j \lambda_j a_{i_j} \in \FF_q \setminus \{ 0 \}$, then we get $P(x) = 0$ by $\FF_2$-linearity.\qed
\end{proof}

\subsection{Consistency}
\label{subsec:consistency}

In this section we characterize the output of algorithm $\Decrypt$ described in Section~\ref{sec:scheme}. As input, $\Decrypt$ receives a vector $\kpriv \in \FF_q^n$ of rank $w$ and a vector $\bfu \in \FF_q^{n-k}$ such that $\bfu = \bfH(\bfy \bfT + \bfp')^\top$, where
\begin{itemize}[label=--]
\item vector $\bfy \in \FF_q^n$ satisfies $\bfH \bfy^\top = \bfH\kpriv^\top$,
\item matrix $\bfT \in \FF_2^{n \times n}$ has $\bfg$-degree $\ell$,
\item vector $\bfp' = \bfg \bfS \bfP \in \FF_q^n$ has rank $t \mydef \lfloor \frac{n-k-\ell-w}{2} \rfloor$.
\end{itemize}

First, notice that $\bfy = \kpriv + \bfc$ for some $\bfc \in \Gab_k(\bfg)$. In the first step of Algorithm~\ref{algo:decrypt}, a vector $\bfx \in \FF_q^n$ solution to $\bfH \bfx^\top = \bfu^\top$ is computed. One can see that the set $S$ of such solutions is
\[
S = \{ \bfy\bfT + \bfp' + \bfc' \mid \bfc' \in \Gab_k(\bfg) \} \subseteq \FF_q^n\,.
\]
Therefore, in step $2$ of Algorithm~\ref{algo:decrypt}, we have
\[
\begin{aligned}
  \bfz = V_\kpriv(\bfx) &=  V_\kpriv((\bfc+\kpriv)\bfT + \bfp' + \bfc')\\
  &= V_\kpriv(\bfc' + \bfc\bfT) + \underbrace{V_\kpriv(\kpriv)}_{\bfzero}\bfT + V_\kpriv(\bfp')\,.
  \end{aligned}
\]
We notably used the $\FF_2$-linearity of $V_\kpriv$. Also recall that, for any $\bfa \in \FF_q^n$, $L_\bfa(X)$ denotes the $\bfg$-interpolating polynomial of $\bfa$. Then we get:
\[
\bfz = (V_\kpriv \cdot (L_{\bfc'} + L_{\bfc \bfT}))(\bfg) + V_\kpriv(\bfp')\,.
\]
Moreover, $L_{\bfc \bfT} = L_\bfc \cdot L_{\bfg\bfT}$ yields $\deg(L_{\bfc \bfT}) \le k - 1 + \ell$ since $\deg_\bfg(\bfT) = \ell$. Therefore, the polynomial $V_\kpriv \cdot (L_{\bfc'} + L_{\bfc \bfT})$ has degree at most $\deg(V_\kpriv) + \max\{ \deg(L_{\bfc'}),  \deg(L_{\bfc \bfT}) \} \le w + k - 1 + \ell$.

We also know that $\rk(V_\kpriv(\bfp')) \le \rk(\bfp') = \rk(\bfP) = t = \lfloor \frac{n-k-\ell-w}{2} \rfloor$. Hence, in third step of Algorithm~\ref{algo:decrypt}, any decoding algorithm for $\Gab_{k+w+\ell}(\bfg)$ that decodes errors of rank at most $t$ will retrieve $V_\kpriv(\bfp')$ from $\bfz$. Finally, Algorithm~\ref{algo:decrypt} outputs a matrix $\hat{\bfP} \in \calP_{t,n}$ such that $\rowspan(\hat{\bfP}) = \rowspan_\bfg(V_\kpriv(\bfp'))$.

As a consequence, decryption fails whenever $\rowspan_\bfg(V_\kpriv(\bfp')) \ne \rowspan(\bfP)$, where $\bfP$ is the original plaintext. First notice that $\rowspan(\bfP) = \rowspan_\bfg(\bfp')$. Then, Lemma~\ref{lem:aux-consistency} shows that if decryption fails, then there exists a non-zero $x \in \colspan(\bfp')$ such that $V_\kpriv(x) = 0$. Let us now recall that the set of zeroes of $V_\kpriv$ is exactly $\colspan(\kpriv)$. Hence we get the following result.

\begin{lemma}
  \label{lem:equiv-failure}
  Let $\bfP \in \calP_{t,n}$. If, on input $(\kpriv, \Encrypt(\kpub, \bfP))$ where $(\kpub, \kpriv) \leftarrow \KeyGen$, algorithm $\Decrypt$ does not output $\bfP$, then  matrix $\bfS$ has been chosen at step $4$, such that $\colspan(\kpriv) \cap \colspan(\bfS \bfP)) \ne \{ 0 \}$.
\end{lemma}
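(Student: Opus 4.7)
The bulk of the work has already been carried out in the paragraphs preceding the lemma; the plan is to assemble those pieces and supply the one missing link, namely the identification of the zero set of the map $V_\kpriv$ with $\colspan(\kpriv)$.

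First, I would summarise the consequences of the computation of $\bfz = V_\kpriv(\bfx)$ that appears just above the lemma. Writing $\bfy = \kpriv + \bfc$ with $\bfc \in \Gab_k(\bfg)$, and using the $\FF_2$-linearity of $V_\kpriv$ together with $V_\kpriv(\kpriv) = \bfzero$, we obtain
\[
\bfz \;=\; \bigl(V_\kpriv\cdot(L_{\bfc'} + L_{\bfc\bfT})\bigr)(\bfg) + V_\kpriv(\bfp').
\]
The first summand is a codeword of $\Gab_{k+\ell+w}(\bfg)$, and $\rk(V_\kpriv(\bfp')) \le \rk(\bfp') = t$. Since $t = \lfloor (n-k-\ell-w)/2\rfloor$ is within the unique decoding radius of $\Gab_{k+\ell+w}(\bfg)$, the decoding step in Algorithm~\ref{algo:decrypt} necessarily returns $V_\kpriv(\bfp')$, so the output is $\hat\bfP$ with $\rowspan(\hat\bfP) = \rowspan_\bfg(V_\kpriv(\bfp'))$. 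Hence decryption fails precisely when $\rowspan_\bfg(V_\kpriv(\bfp')) \ne \rowspan(\bfP) = \rowspan_\bfg(\bfp')$.

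Next I would apply Lemma~\ref{lem:aux-consistency} with $P = V_\kpriv$ and $\bfa = \bfp'$: failure of the above equality produces a non-zero element $x \in \colspan(\bfp')$ with $V_\kpriv(x) = 0$. The key step, and essentially the only substantive one, is to identify the set $\{x \in \FF_q : V_\kpriv(x)=0\}$ with $\colspan(\kpriv)$. Since $V_\kpriv$ is the minimum vanishing (monic, skew) polynomial of $\kpriv$, its zero set is an $\FF_2$-subspace of $\FF_q$ of dimension at most $\deg(V_\kpriv) = \rk(\kpriv) = w$; on the other hand, $V_\kpriv(\kpriv) = \bfzero$ together with $\FF_2$-linearity of evaluation forces $V_\kpriv$ to vanish on the whole subspace $\colspan(\kpriv)$, which has $\FF_2$-dimension exactly $w$. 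The two subspaces therefore coincide.

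Combining the two previous steps yields a non-zero $x \in \colspan(\kpriv) \cap \colspan(\bfp')$. Finally, $\bfp' = \bfg \bfS \bfP$ by construction in Algorithm~\ref{algo:encrypt}, and $\colspan(\bfg\bfS\bfP)$ is (via the $\FF_2$-linear isomorphism $\FF_2^n \to \FF_q,\ \bfy \mapsto \sum_j g_j y_j$) identified with $\colspan(\bfS\bfP)$; the intersection therefore corresponds to the condition in the statement. The only place where anything nontrivial happens is the identification of zeros of $V_\kpriv$ with $\colspan(\kpriv)$, and that relies only on the standard properties of minimum vanishing skew polynomials recalled in Section~\ref{subsec:maths}; everything else is bookkeeping.
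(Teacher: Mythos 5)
Your proposal is correct and follows essentially the same route as the paper: the preceding consistency analysis shows the decoder returns $V_\kpriv(\bfp')$, Lemma~\ref{lem:aux-consistency} then produces a non-zero $x \in \colspan(\bfp')$ with $V_\kpriv(x)=0$, and the identification of the zero set of $V_\kpriv$ with $\colspan(\kpriv)$ finishes the argument. Your dimension-counting justification of that last identification (kernel of dimension at most $\deg V_\kpriv = w$ containing the $w$-dimensional space $\colspan(\kpriv)$) is a correct filling-in of a step the paper merely asserts.
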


One can now estimate the probability of failure of $\Decrypt$.

\begin{lemma}
  Let $(\kpub, \kpriv) \leftarrow \KeyGen$ be any pair of keys generated by $\KeyGen$, on public parameters $n,w,t$. Then, for every $\bfP \in \calP_{t,n}$,
  \[
  \PP_{\bfS, \bfT, \bfy} \left( \hat{\bfP} \ne \bfP \;\middle|\;
  \begin{array}{l}
    \bfu \leftarrow \Encrypt(\kpub, \bfP)\\
    \hat{\bfP} \leftarrow \Decrypt(\kpriv, \bfu)
  \end{array}
  \right) \le 2^{-(n-t-w)}\,.
  \]
\end{lemma}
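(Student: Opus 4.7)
My plan is to start from Lemma~\ref{lem:equiv-failure}, which reduces the event of a decryption failure to the event that $\colspan(\kpriv) \cap \colspan(\bfp') \ne \{0\}$. Since this intersection condition depends only on the draw of $\bfS$ and not on $\bfT$ or $\bfy$, it suffices to bound its probability as $\bfS$ ranges uniformly over the invertible matrices of $\FF_2^{n \times n}$. Write $\calU \mydef \colspan(\kpriv)$ for the fixed $w$-dimensional $\FF_2$-subspace of $\FF_q$ associated with the secret key, and $\calV \mydef \colspan(\bfp')$.

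The key step is to show that $\calV$ is uniformly distributed over the set of $t$-dimensional $\FF_2$-subspaces of $\FF_q$. Since $\bfg$ is a basis and $\bfS$ is uniform in $GL_n(\FF_2)$, the vector $\bfg\bfS$ is a uniformly random ordered basis of $\FF_q/\FF_2$, and the map $\phi : \FF_2^n \to \FF_q$, $\bfv \mapsto (\bfg\bfS)\bfv^\top$, is an $\FF_2$-linear isomorphism. Expanding the definition of $\colspan$ on the row vector $\bfp' = \bfg\bfS\bfP$ yields $\calV = \phi(\calW)$, where $\calW \subseteq \FF_2^n$ is the column space of $\bfP$, a fixed $t$-dimensional subspace since $\rk(\bfP) = t$. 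Factoring $\phi$ through multiplication by $\bfS$ and then the fixed isomorphism $\phi_\bfg : \bfv \mapsto \bfg\bfv^\top$, one sees that $\bfS\calW$ is uniform over the $t$-dimensional subspaces of $\FF_2^n$ by transitivity of $GL_n(\FF_2)$; $\phi_\bfg$ then transports this uniformity to $\FF_q$.

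Once uniformity is established, a union bound finishes the argument. For any fixed non-zero $x \in \FF_q$, the number of $t$-dimensional $\FF_2$-subspaces of $\FF_q$ containing $x$ is $\qbin{n-1}{t-1}{2}$ (obtained by quotienting by $\langle x \rangle_{\FF_2}$), so $\PP(x \in \calV) = \qbin{n-1}{t-1}{2}/\qbin{n}{t}{2} = (2^t-1)/(2^n-1)$. Summing over the $2^w - 1$ non-zero elements of $\calU$ and using the elementary inequality $(2^w-1)/(2^n-1) \le 2^{w-n}$ (valid for $w \le n$) gives
\[
\PP\bigl(\calU \cap \calV \ne \{0\}\bigr) \;\le\; \frac{(2^w-1)(2^t-1)}{2^n-1} \;\le\; 2^{w+t-n} \;=\; 2^{-(n-t-w)}.
\]

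The main obstacle I anticipate is not the counting, which is routine once stated, but the verification that $\colspan(\bfp')$ is genuinely uniform over $t$-dimensional subspaces of $\FF_q$; this crucially requires $\bfS$ to range over the full $GL_n(\FF_2)$, and the RREF assumption on $\bfP$ plays no role beyond guaranteeing that its column space has dimension exactly $t$. The union bound is slack by essentially a factor of two, but it already matches the claimed bound.
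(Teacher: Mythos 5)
Your proposal is correct and follows essentially the same route as the paper: reduce via Lemma~\ref{lem:equiv-failure} to the event $\colspan(\kpriv) \cap \colspan(\bfS\bfP) \ne \{0\}$, then bound the probability that a uniformly random $t$-dimensional $\FF_2$-subspace meets a fixed $w$-dimensional one non-trivially by $\frac{(2^t-1)(2^w-1)}{2^n-1} \le 2^{t+w-n}$. The only difference is that you spell out what the paper dismisses as \enquote{easy to check}, namely the uniformity of $\colspan(\bfS\bfP)$ under uniform $\bfS \in GL_n(\FF_2)$ and the union-bound counting via $\qbin{n-1}{t-1}{2}/\qbin{n}{t}{2} = (2^t-1)/(2^n-1)$.
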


\begin{proof}
  Using Lemma~\ref{lem:equiv-failure}, we have
  \[
  \begin{aligned}
   \PP_{\bfS, \bfT, \bfy} &\left( \hat{\bfP} \ne \bfP \;\middle|\;
   \begin{array}{l}
     \bfu \leftarrow \Encrypt(\kpub, \bfP)\\
     \hat{\bfP} \leftarrow \Decrypt(\kpriv, \bfu)
   \end{array}
   \right)
   ~\\ &=  \PP_\bfS\big( \colspan( \kpriv ) \cap \colspan( \bfS \bfP ) \ne \{ 0 \} \big)\,.
  \end{aligned}
  \]
  It is easy to check that the probability that a $t$-dimensional random subspace of $\FF_2^n$ intersects non-trivially a fixed subspace of dimension $w$ is bounded by $\frac{(2^t-1)(2^w-1)}{2^n-1} \le 2^{t+w-n}$. This concludes the proof.\qed
\end{proof}



\subsection{Security proof}
\label{subsec:securityproof}

Let us first introduce two problems to which the security of RAMESSES can be reduced. Problem~\ref{prob:corgab} is an ad hoc problem. The search version of Problem~\ref{Prob:DecodeGab} corresponds to decoding errors of rank $w$ in a Gabidulin code; this problem is believed hard for $w$ between $\frac{n-k}{2}$ and $n-k$, and a improvement in solving this problem would be significant in coding theory.
\begin{problem}[Syndrome correlation for Gabidulin codes ({\corgab})]\\
  \label{prob:corgab}  
  Let $\bfH \in \FF_q^{(n-k) \times n}$ be a fixed parity-check matrix of $\Gab_k(\bfg)$, and $1 \le \ell \le n-1$.
  \begin{itemize}
  \item {\bf Input:} access to distributions
    \begin{enumerate}
    \item $\mathcal{D}_1$: $(\bfH \bfx^\top, \bfH \bfT^\top \bfx^\top)$, where $\bfx \leftarrow_\$ \FF_q^n$ and $\bfT  \leftarrow_\$ \calM_\ell$,
    \item $\mathcal{D}_2$: $(\bfH \bfx^\top, \bfr^\top)$, where $\bfx \leftarrow_\$ \FF_q^n$ and $\bfr \leftarrow_\$ \FF_q^{n-k}$.
    \end{enumerate}
  \item 
    {\bf Goal:} distinguish between $\mathcal{D}_1$ and $\mathcal{D}_2$.
  \end{itemize}
\end{problem}

\begin{problem}[Syndrome decoding for Gabidulin codes, {\gabsd}]\\
  \label{Prob:DecodeGab}
   Let $\bfH \in \FF_q^{(n-k) \times n}$ be a fixed parity-check matrix of $\Gab_k(\bfg)$, and $\frac{n-k}{2} < w < n-k$.
\begin{itemize}
\item  {\bf Input:} access to distributions
    \begin{enumerate}
    \item $\mathcal{D}_1$: $\bfH \bfx^\top$, where $\bfx \leftarrow_\$ S_w$,
    \item $\mathcal{D}_2$: $\bfr^\top$, where $\bfr \leftarrow_\$ \FF_q^{n-k}$.
    \end{enumerate}
\item { \bf Goal:} distinguish between $\mathcal{D}_1$ and $\mathcal{D}_2$.
\end{itemize}
\end{problem}

We now shortly show the indistinguishability under chosen plaintext attacks (IND-CPA) of RAMESSES with the following sequence of games.
\begin{itemize}
\item[] {\bf Game 0.} The real scheme with plaintext $\bfP$.
\item[] {\bf Game 1.} We modify {\bf Game 0} as follows. In the key generation, the vector $\kpriv$ is now picked uniformly at random in $\FF_q^n$, without any rank constraint.  
 \item[] {\bf Game 2.} We modify {\bf Game 1} as follows. In the encryption algorithm, $\bfH (\bfy\bfT + \bfg\bfS\bfP_{(1)})^\top$ is replaced by $\bfr + \bfH(\bfg\bfS\bfP_{(1)})^\top$, where $\bfr$ is generated uniformly at random in $\FF_q^{n-k}$.
 \item[] {\bf Game 3.} We modify {\bf Game 2} as follows. The plaintext $\bfP_{(1)}$ is replaced by the plaintext $\bfP_{(2)}$.
 \item[] {\bf Game 4.} This game is identical to {\bf Game 1}, except that the plaintext is $\bfP_{(1)}$ is replaced by the plaintext $\bfP_{(2)}$.
 \item[] {\bf Game 5.} The real scheme with plaintext $\bfP_{(2)}$.
\end{itemize}
One can then prove that the advantage ${\rm Adv}_\calA^{\bf Dist}$ for an adversary $\calA$ to distinguishing the encryption of $\bfP_{(1)}$ and $\bfP_{(2)}$ satisfies:
\[
{\rm Adv}_\calA^{\bf Dist} \le 2 ( {\rm Adv}_\calA^{\gabsd} + {\rm Adv}_\calA^{\corgab})\,.
\]
Roughly speaking, one actually mimics the security proof given in~\cite{AguilarBDGZ18}. The $2 {\rm Adv}_\calA^{\gabsd}$ term comes from transitions between games 0 and 1, and games 4 and 5, whereas transitions between games 1 and 2, and games 3 and 4 yield the $2 {\rm Adv}_\calA^{\corgab}$ term. Games 2 and 3 are information-theoretically indistinguishable since $\bfr$ is random.

\subsection{Existing attacks}
\label{subsec:security}


In the following, we denote by $\lambda$ the desired security parameter, \emph{i.e.}, any attack against the cryptosystem must cost at least $2^\lambda$ operations over $\FF_2$.

\paragraph{Exhaustive search attacks.} In order to avoid attacks by exhaustive search, one has the following constraints on the parameters.
\begin{enumerate}
\item $| \calP_{t,n}| = \qbin{n}{t}{2} \ge 2^\lambda$, satisfied when $t(n-t) \ge \lambda$.
\item $|\{ \kpriv \}| \ge \qbin{n}{w}{2} \ge 2^\lambda$, satisfied when $w(n-w) \ge \lambda$.
\item $|\calM_\ell| \ge 2^\lambda$, satisfied when $(\ell+1)n \ge \lambda $.
\end{enumerate}

\paragraph{Attack by decoding beyond the unique decoding radius of Gabidulin codes.} Let $\bfe' \in \FF_q^n$ be \emph{any} solution of $\bfH \bfe'^\top = \kpub^\top$ of rank $\le w$. From the consistency analysis one can see that $\bfe'$ can be used as an alternate private key in the $\Decrypt$ algorithm. The computation of such a vector $\bfe'$ actually corresponds to the search version of {\gabsd} problem.

This problem is easy for $w \le \lfloor \frac{n-k}{2} \rfloor$ (it corresponds to half-minimum-distance decoding) and for $w \ge n-k$ (equivalent to interpolation for linearized polynomials). For our concern, we have $\lfloor \frac{n-k}{2} \rfloor < w < n-k$, and we believe that the search version of {\gabsd} is hard in this range of parameters.

A solution consists in enumerating vector spaces of dimension slightly higher than $w$, checking whether they guessed correctly a large part of the solution space, and in such case, interpolating the solution. Roughly speaking, the number of valid choices for the subspace is large, but the complexity of finding one remains exponential in the code length. Precisely, in our settings ($m=n$, and $n-k$ even) the number of vector spaces to test before finding one solution is on average
\[
   \mathcal{N}_{Class-{\gabsd}} \approx 0.3\cdot 2^{\delta(n+k-2\delta)}\,,
\]
where $\delta \mydef w - \lfloor \frac{n-k}{2} \rfloor > 0$. This quantity is used as a bound for the complexity of solving {\gabsd}. By using a straightforward Grover algorithm, we obtain that the number of iterations to be completed on a quantum computer is roughly 
\[
   \mathcal{N}_{Quant-{\gabsd}} \approx 0.55\cdot 2^{\frac{\delta}{2}(n+k-2\delta)}\,.
\]

\paragraph{Attack \emph{via} a reduction to a quadratic system over $\FF_2$.} Given a vector $\bfe \in \FF_q^n$ with  $\rk(\bfe) = w$, any solution $\bfy \in \FF_q^n$ to $\bfH \bfy^\top = \bfH \bfe^\top$ can be written as $\bfy = \bfc + \bfe$ for some $\bfc \in \Gab_{k+\ell}(\bfg)$. Therefore, $\bfy$ satisfies
\begin{equation}
  \label{eq:key-eq}
V_\bfe(\bfy) = (V_\bfe \cdot U)(\bfg)\,, 
\end{equation}
where $U(X) = \sum_{i=0}^{k+\ell-1} u_i X^i \in \FF_q[X, \theta]$. Hence, an attack would consist in searching for $V_\bfe$ and $U$ in the previous equation, for some fixed $\bfy$ solution to $\bfH \bfy^\top = \kpub^\top$.

Equation~\eqref{eq:key-eq} can be turned into a quadratic system over $\FF_2$ (see Appendix~\ref{app:key-recovery-by-quadratic-system} for details). Using results of Bardet \emph{et al.}~\cite{BardetFSS13}, the solving complexity would be in $\calO(2^{0.561\, n^2})$, which remains much larger than the complexity of the previous attack.\footnote{However, notice that the system to be solved in~\cite{BardetFSS13} is assumed random, and such that no specialization of variables can be made. This is unlikely the case for our system, but it requires a finer analysis --- which is not the scope of this paper --- to understand whether improvements can be made in order to solve the system.}

\paragraph{Attack \emph{via} a reduction to a {\minrank} instance.} The recovery of a representative $\bfp' = \bfg \bfS \bfP \in \FF_q^n$ of the plaintext $\bfP$, given only a ciphertext $\bfu$ and $\kpriv$, can be modeled as follows. First, one computes (i) any solution $\bfx \in \FF_q^n$ of $\bfH \bfx^\top = \bfu^\top$, and (ii) any solution $\bfy \in \FF_q^n$ to $\bfH\bfy^\top = \kpub^\top$. Due to the form of the ciphertext, this leads us to
\begin{equation}
  \label{eq:base-minrank}
\bfx - \bfy\bfT - \bfc  = \bfp' \,,
\end{equation}
where $\bfc \in \Gab_{k+\ell}(\bfg)$ and $\bfT \in \FF_2^{n \times n}$ are unknown to the attacker. Notice that $\bfT$ lies in a $\FF_2$-vector space of dimension $(\ell+1)n$, since $\bfg\bfT \in \Gab_{\ell+1}(\bfg)$. Two kinds of attacks can then be mounted to solve~\eqref{eq:base-minrank}.
  
First, Equation~\eqref{eq:base-minrank} can be written $\bfx = (\bfc + \bfy \bfT) + \bfp'$, which means that the problem can be rephrased as decoding an error $\bfp'$ of rank $t$ in the underlying code
\[
\calD \mydef \Gab_{k+\ell}(\bfg) + {\rm span}_{\FF_2}\big( \{ \bfy \bfT \mid \bfT \in \calM_\ell \} \big)\,.
\]
Notice that $\calD \subseteq \FF_q^n$ is an $\FF_2$-linear code of $\FF_2$-dimension at most $(k + 2\ell + 1)n$. One can then write $\bfy \bfT = L_\bfy(\bfg \bfT)$, which yields $\calD = \Gab_{k+\ell}(\bfg) + L_\bfy(\Gab_{\ell+1}(\bfg))$.  A straightforward decoding approach would lead to an attack in time roughly $2^{kr}$. One could also try to decode in the smallest $\FF_q$-linear code containing $\calD$, and use the additional structure provided by the $\FF_q$-linearity. This structure has been widely employed in the recent improvements, see~\cite{BardetBBGNRT09}. However, it is unlikely that the $\FF_q$-dimension of ${\rm span}_{\FF_q}(\calD) = \Gab_{k+\ell}(\bfg) + {\rm span}_{\FF_q}(L_\bfy(\Gab_{\ell+1}(\bfg)))$ is small, since the $\FF_2$-endomorphism of $\FF_q[X,\theta]$ defined by $P \mapsto L_\bfy P$ is not $\FF_q$-linear.

Second, one can see Equation~\eqref{eq:base-minrank} as an instance of {\minrank}, a problem formally introduced by Courtois in~\cite{Courtois01} after the cryptanalysis of HFE~\cite{KipnisS99}.
\begin{problem}[{\minrank} search problem] Let $\KK$ be a field.
\label{prob:minrank}
  \begin{itemize}
   \item  {\bf Input:} $\bfM_0, \bfM_1, \dots, \bfM_K \in \KK^{N \times n}$ and an integer $t$.
   \item  {\bf Goal:} Find $(x_1, \dots, x_K) \in \KK^K$ such that $\rk_\KK(\bfM_0 - \sum_{i=1}^K x_i \bfM_i) \le t$.
  \end{itemize}
\end{problem}
 Let us denote by $\{ \bfT_1, \dots, \bfT_{n(\ell+1)} \} \subseteq \FF_2^{n\times n}$ an $\FF_2$-basis of $\Gab_{\ell+1}(\bfg)$, the smallest vector space containing $\calM_\ell$. Similarly, $\ext_\bfg(\bfc)$ can be written in some basis $\{ \bfC_1, \dots, \bfC_{n(k+\ell)} \} \subseteq \FF_2^{n\times n}$ of the $\FF_2$-vector space of dimension $n(k+\ell)$ representing $\Gab_{k+\ell}(\bfg)$. Applying $\ext_\bfg$ to Equation~\eqref{eq:base-minrank}, we get:
\[
\bfX - \sum_{i=1}^{n(\ell+1)} t_i \bfY \bfT_i -  \sum_{i=1}^{n(k+\ell)} c_i \bfC_i  = \bfP',
\]
where $(\bfX, \bfY, \bfP') = (\ext_\bfg(\bfx), \ext_\bfg(\bfy), \ext_\bfg(\bfp'))$. Since $\rk(\bfP') =t$, one gets an instance of the {\minrank} problem, with one \enquote{base matrix} $\bfX \in \FF_2^{n \times n}$ and $K \mydef n(k+2\ell+1)$ \enquote{summand matrices} $\{ \bfY\bfT_1, \dots, \bfY\bfT_{n(\ell+1)}, \bfC_1, \dots, \bfC_{n(k+\ell)} \}$.

There exist several approaches to solve the {\minrank} problem.  In~\cite{GoubinC00}, Goubin and Courtois gave an algorithm which finds a solution in expected time $\calO(K^32^{t \lceil K/n\rceil})$. In 1999, Kipnis and Shamir~\cite{KipnisS99} proposed a multivariate formulation of {\minrank} which can be solved by computing Groebner bases. Such computations can be run in time $\calO(\binom{m+d-1}{d}^\omega)$, where $2 \le \omega < 3$ is the linear algebra constant, $m = t(n-t)+K$ and  $d$ is the \emph{degree of regularity} of the system~\cite{Lazard83}. Faugère, Levy-dit-Vehel and Perret~\cite{FaugereLP08} proved that, in the Kapnis-Shamir formalism, any instance can be reduced to a simpler one if $\Delta \mydef K - (n-t)^2 > 0$. In our case, setting $w \ge \ell + 1$ ensures that $\Delta \le 0$. Moreover, the authors proved that the degree of regularity is lower than what is expected for random systems, and it seems to be upper bounded by $t+2$ heuristically. This heuristic was confirmed by Verbel \emph{et al.}~\cite{VerbelBCPS19} for superdetermined instances, and by Bardet \emph{et al.}~\cite{BardetBBGNRT09} in the context of decoding low rank errors in random codes. Finally, the latter work also presents instances for which the solving degree decreases to $d = t$. We choose to consider this conservative setting; the running time for the computation of the associated Groebner basis is thus in
\[
\calO\left(\;\binom{t(n-t) + n(k+2\ell+1) +t-1}{t}^\omega\;\right)\,.
\]

To sum up, the reduction to {\minrank} leads us to the following bounds on the parameters:
\[
w \ge \ell + 1,\quad\quad   \omega \cdot \log \tbinom{n(k+2\ell+t+1)-t^2+t-1}{t}  \ge \lambda ,\quad\quad t(k+2\ell+1)  \ge \lambda\,.
\]

\section{Conclusion}

The parameters we proposed for RAMESSES are deliberately aggressive so that to encourage research in studying the security of the encryption scheme. The simplicity and versatility of the scheme enables very efficient tuning for many sets of parameters, without making them grow prohibitively. Namely, the size of keys and ciphertexts grow linearly with the security parameter; this is usually not the case in other code-based encryption schemes where sizes (notably the public key size) grow quadratically with the security level, except for systems using structural tricks to reduce the key size. 



\bibliographystyle{abbrv}
\bibliography{biblio}

\appendix

\section{A quadratic system model}

\label{app:key-recovery-by-quadratic-system}

Without loss of generality, we here assume that $\bfg = (g^{q^0}, g^{q^1}, \dots, g^{q^{n-1}})$ is a normal basis of $\FF_q/\FF_2$. Recall that any solution $\bfy \in \FF_q^n$ to $\bfH \bfy^\top = \bfH \bfe^\top$ satisfies
\begin{equation}
  \label{eq:key-eq-2}
V_\bfe(\bfy) = (V_\bfe \cdot U)(\bfg)\,, 
\end{equation}
where $\rk(\bfe) = w$, and $U(X) = \sum_{i=0}^{k+\ell-1} u_i X^i \in \FF_q[X, \theta]$. The attack would consist in searching for $V_\bfe$ and $U$ in the previous equation, for some fixed $\bfy$ solution to $\bfH \bfy^\top = \kpub^\top$.

Let us now write $V_\bfe(X) = \sum_{i=0}^w v_i X^i$, and denote by $(v_{r,i})_{1 \le r \le n} \in \FF_2^n$ (resp. $(u_{s,j})_{1\le s \le n}$) the decomposition of $v_i$ (resp. $u_j$) in the basis $\bfg$. Then, Equation~\eqref{eq:key-eq-2} rewrites
\begin{equation}
  \label{eq:quadratic}
\Big( \sum_{i=0}^w \sum_{r=1}^n v_{r,i} \bfB^{q^r} \bfC^i \Big) \ext_\bfg(\bfy) = \sum_{i=0}^w \sum_{j=0}^{k+\ell-1} \sum_{r=1}^n \sum_{s=1}^n v_{r,i} u_{s,j} \;\bfB^{q^r+q^{s+i}} \bfC^{i+\ell}\,,
\end{equation}
where $\bfB = \ext_\bfg(g \cdot \bfg) \in \FF_2^{n\times n}$ is the matrix of the multiplication by $g$ in $\FF_q$, and $\bfC \in \FF_2^{n\times n}$ is the right-cyclic-shift matrix. Equation~\eqref{eq:quadratic} thus defines a quadratic system over $\FF_2$, with $n(w+k+\ell+1)$ unknowns coefficients $(v_{r,i})_{r,i}$ and $(u_{s,j})_{s,j}$, involved in $n^2$ equations (the coefficients of the matrices).

Though the above system is not random (random systems are believed to be the hardest), we report one result concerning the complexity of random Boolean quadratic systems. In~\cite{BardetFSS13}, Bardet \emph{et al.} gave an algorithm solving such a system. Without any specialization of variables, its running time is in $O(2^{2H(M(\alpha))N_{\rm var}})$, where $N_{\rm var}$ is the number of variables, $\alpha \mydef  N_{\rm eq} / N_{\rm var}$ is the ratio between equations and variables,
\[
H(t) \mydef - t \log_2(t) - (1-t)\log_2(1-t)
\]
is the binary entropy function, and
\[
M(x) \mydef -x + \frac{1}{2} + \frac{1}{2}\sqrt{2x^2-10x-1+2(x+2)\sqrt{x(x+2)}}\,.
\]

In our case, $N_{\rm var} = \frac{n^2}{\alpha}$ and the parameters have been chosen such that $\alpha = \frac{n}{k+w+\ell+1} \in (1.0,1.33)$. It leads to $H(M(\alpha)) > 0.372$, hence $\frac{2H(M(\alpha))}{\alpha} > 0.561$. In other terms, this approach leads to:
\[
0.561\, n^2 \ge \lambda
\]
under the assumptions that the system \emph{behaves like} a random system and we do not specialize variables.

\end{document}